\documentclass[runningheads]{llncs}
\usepackage[T1]{fontenc}
\usepackage{amsmath,graphicx,hyperref}
\usepackage{tabularray} 
\UseTblrLibrary{booktabs}
\usepackage{xcolor}  
\usepackage{caption}
\usepackage{multirow,multicol}
\usepackage{amsfonts}
\usepackage{subcaption}
\usepackage{amssymb}
\usepackage{graphicx}
\makeatletter
\def\@evenhead{\slshape\leftmark\hfil} 
\def\@oddhead{\hfil\slshape\rightmark} 
\makeatother
\begin{document}
\renewcommand{\thepage}{\phantom{\arabic{page}}}
\title{ChaRVoC: A Challenge-Response Voice Cancelable Authentication System}
%
%
\author{\begin{tabular}{c}
Phuc-Khang Vo-Hoang\thanks{Two authors have the equal contribution}$^{ \dagger}$  \qquad Hoang C. Ta$^{\star \dagger}$  \qquad Nhien-An Le-Khac$^{\ddagger} $ \\  Dinh-Thuc Nguyen$^{\dagger}$  \qquad Hong-Hanh Nguyen-Le$^{\ddagger}$
\end{tabular}}
\authorrunning{Phuc-Khang Vo-Hoang et al.}
%
\institute{$^{\dagger}$ University of Science, Ho Chi Minh City, Vietnam \\
         $^{\ddagger}$ University College Dublin, School of Computer Science, Dublin, Ireland}
\maketitle              
\begin{abstract}
In this work, we present a Challenge-Response Voice Cancelable authentication system, called \textbf{ChaRVoC}, which provides protection against replay attacks, revocability issues, and template compromise. Our approach integrates three security factors: (1) inherent voice biometric characteristics, (2) user-memorized secret keys enabling template revocability, and (3) dynamic system-generated challenges providing liveness detection. Specifically, we introduce a novel HashGray-XOR scheme which combines a cryptographic hash function with an unrecoverable graycode-based transformation to create secured templates that are mathematically proven to be non-invertible. We compare our methods with existing cancelable biometric methods (WTA, IoM, RoE) on VoxCeleb1, TIMIT, and VOiCES datasets to show the recognition performance of our proposed system. We also show that our system achieves both cancelability and unlinkability properties. 

\keywords{Challenge-response authentication \and voice biometric \and cancelable biometric}
\end{abstract}
\section{Introduction}
\label{sec:intro}

Voice-based authentication systems have gained remarkable popularity in recent years, driven by the proliferation of voice-controlled digital assistants, smart home devices, and mobile applications \cite{yadav2023voice,guaman2018device}. However, current voice authentication systems \cite{chee2018cancellable,nguyen2025privacy,nautsch2018homomorphic,paulini2016multi} suffer from three fundamental vulnerabilities that limit their deployment in practice. First, voice biometrics are permanently linked to users and cannot be revoked if compromised \cite{bhargav2006privacy}. Second, stored voice templates in databases present targets for attackers, as a single breach can compromise users across multiple systems \cite{chee2018cancellable,nguyen2025privacy}. Third, these systems are vulnerable to replay attacks, where recorded voice samples can be used to impersonate legitimate users \cite{pradhan2019combating}. 

While cancelable biometric techniques have been proposed to protect templates stored in databases, existing approaches either require complex cryptographic operations that are computationally expensive \cite{billeb2015biometric,nautsch2018homomorphic,johnson2013vaulted} or they fail to address the replay attack vulnerability \cite{pradhan2019combating}. Text-dependent speaker verification systems can prevent replay attacks through dynamic phrases \cite{heigold2016end,rahman2018attention}, but they cannot prevent voice biometrics stored in databases from server-specific attacks. Current solutions address these challenges in isolation \cite{mittal2024pitch,yasur2023deepfake,nguyen2025privacy,nautsch2018homomorphic}, resulting in systems that remain vulnerable to at least one attack vector. 

In this work, we present a \textbf{Cha}llenge-\textbf{R}esponse \textbf{Vo}ice \textbf{C}ancelable authentication system, named \textbf{ChaRVoC}, which simultaneously addresses replay attacks, revocability issues, and template compromise. Our approach combines three security factors: (1) the user's voice biometric characteristics, (2) a user-specific secret key (i.e., PIN) that each person memorizes, and (3) a system-generated random challenge that provides replay attack resistance. Critically, we introduce our novel HashGray-XOR scheme that combines the secret key with the voice features to create protected templates that are both cancelable and non-invertible. Particularly, this scheme consists of two functions: (i) a cryptographic hash function that maps the random numeric sequence to a fixed-length binary string and (ii) an unrecoverable graycode-based function that transforms a real-valued voice feature vector into a binary representation. The binary feature and the hashed key are combined using the XOR operation to create the final protected template. We also mathematically prove the non-invertibility of our HashGray-XOR and demonstrate that templates generated from our system are computationally unlinkable. 



\section{Related Work}
\label{sec:related-work}
Existing voice-based authentication systems address replay attacks and template compromise in isolation. Text-dependent speaker verification systems \cite{heigold2016end,rahman2018attention} mitigate replay attacks by requiring users to speak dynamically generated phrases. Zhange et al. \cite{zhang2017hearing} detect replay attacks through analysis of articulatory movements during speech production, distinguishing genuine human speech from loudspeaker reproductions based on physical mouth and tongue dynamics. To protect voice biometrics stored in databases, cryptographic techniques are applied, including Paillier cryptosystem \cite{nautsch2018homomorphic}, fuzzy commitment scheme \cite{billeb2015biometric}, and fuzzy vault scheme \cite{johnson2013vaulted}. Alternatively, non-invertible transformation approaches provide template security through mathematical one-way functions. El-Moneim et al. \cite{elmoneim2022cancellable} achieve template revocability by transforming speech into spectrograms and extracting user-key-dependent patches. Nguyen et al. \cite{honghanhnl2025privacy} proposed a Ranking-of-Element hashing to encode elements based on the count of smaller-valued elements. 

Unlike these isolated solutions, we propose a unified framework addressing replay attacks, template compromise, and revocability simultaneously. While Ceaparu et al. \cite{ceaparu2020multifactor} present a similar multi-factor approach, they require pre-registered personal smartphones, whereas our system automatically generates challenge sequences without additional device requirements, improving deployment flexibility.

\section{Proposed Method}
\label{sec:method}


\subsection{System Overview}
\label{subsecc:overview}

\begin{figure*}[t]
    \centering
    \includegraphics[width=0.98\textwidth]{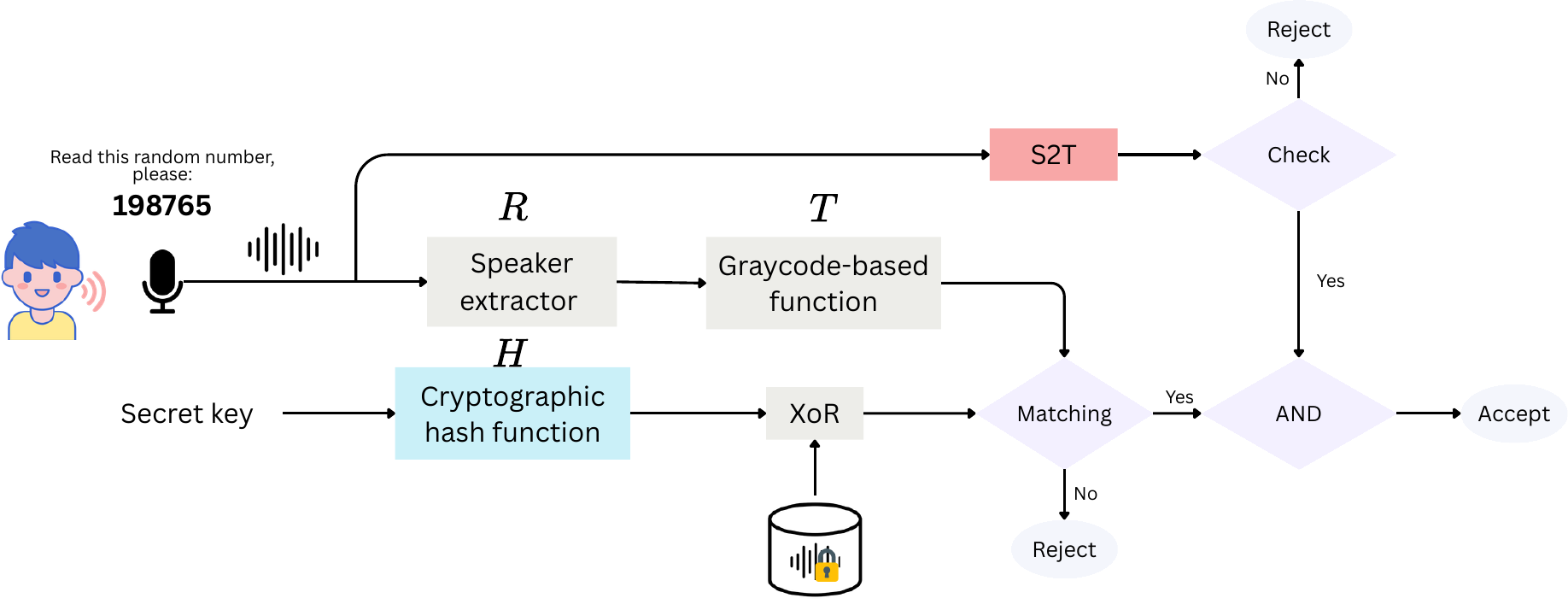}
    \caption{An overview of our Challenge-Response Voice Cancelable Voice Authentication system, namely ChaRVoC.}
    \label{fig:system}
\end{figure*}

As illustrated in Figure \ref{fig:system}, the ChaRVoC system operates through several key components to provide multi-layered security. The Random Challenge Generator produces system-generated numeric sequences (e.g., "198765") that serve as dynamic authentication challenges. These challenges change with each authentication attempt, ensuring that no two authentication sessions use the same input. The speech-to-text (S2T) module verifies that the user correctly speaks the displayed random number, providing real-time liveness detection and preventing replay attacks. Even if an attacker obtains a recording of the user's voice, they cannot authenticate without speaking the current challenge. Simultaneously, a speaker extractor $\mathcal{R}$ transforms raw voice samples into discriminative real-valued feature vectors, capturing the unique voice characteristics of the speaker. The core security mechanism of our system is the proposed HashGray-XoR scheme, which protects biometric templates through two integrated functions: the graycode-based function $\mathcal{T}$ and the cryptographic hash function $\mathcal{H}$.

During enrollment, the user provides a voice sample and a secret key. A speaker extractor $\mathcal{R}$ extracts features from the voice sample, which are transformed into a binary template via the graycode-based function. Simultaneously, the secret key is hashed using $\mathcal{H}$ to produce a binary string. These are combined through XOR to create the protected template. During authentication, the system displays a random challenge that the user reads aloud while providing their secret key. The spoken response is processed through two parallel paths: the S2T module verifies that the correct challenge was spoken, and the speaker recognition model extracts voice features. These features are transformed into a binary template via the graycode-based function $\mathcal{T}$. Meanwhile, the provided secret key is hashed using $\mathcal{H}$, and this hash is XORed with the stored protected template to recover what should be the original enrollment template. This recovered template is then matched against the current voice template. Authentication succeeds only when the S2T verification passes, and the recovered template matches the current voice template. 

\subsection{HashGray-XOR Scheme}
\label{ssec:hashgray}
The HashGray-XOR scheme is formally defined as a transformation $f: \{0, 1\}^* \times I \rightarrow \{0,1\}^n$ that takes as input a secret key $k \in \{0,1\}^*$ and a voice sample $v \in I$ with $I$, where $I$ is the set of all possible voice samples. The output is an $n$-bit binary hash value $t \in \{0,1\}^n$, defined by:
\begin{equation}
    t = f(k,v) = \mathcal{H}(k) \oplus \mathcal{T}(v), \forall k \in \{0,1\}^k, v \in I,
\end{equation}
where $H: \{0,1\}^* \rightarrow \{0,1\}^n$ is the cryptographic hash function, $T: I \rightarrow\{0,1\}^n$ is the unrecoverable graycode-based function, and $\oplus$ denotes the bitwise XOR operation.

\subsubsection{Unrecoverable Graycode-Based Function}
The unrecoverable function $\mathcal{T}$ transforms real-valued voice features into binary representations through a composition function:  $\mathcal{T}(v) = \mathcal{G} \circ r \circ \mathcal{R}(v) = \mathcal{G}(r(\mathcal{R}(v)))$, where $\mathcal{G}: \mathbb{N} \rightarrow \{0,1\}^{l+1}$ is the graycode encoding function, $r: \mathbb{R} \rightarrow \mathbb{Z}$ is the roundoff function defined as $r(x) = \text{round}(x \times 10^p)$ with precision parameter $p$. The sign of each voice feature $v_i = r(f_i)$ after being applied to the roundoff function $r$ is encoded as a binary bit $b_0 = 0$ if $v_i < 0$ and $b_0 = 1$ if $v_i \geq 0$. The integer value $|v_i|$ is converted to an $(l + 1)$-bit Graycode representation.

\begin{theorem} (Unrecoverability Property)
    The function $\mathcal{T}: I \rightarrow {0,1}^n$ constructed as above is unrecoverable, meaning that with $t = \mathcal{T}(v)$, it is computationally infeasible to find back the original $v \in I$.
    \label{theorem:unrecoverability}
\end{theorem}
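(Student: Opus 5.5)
The natural route is to decompose $\mathcal{T} = \mathcal{G}\circ r\circ \mathcal{R}$ and show that the composite is many-to-one in two places. Exact inversion is then information-theoretically impossible, and the best an adversary can do is guess among exponentially large preimage sets. The graycode map $\mathcal{G}$ together with the sign bit is a bijection onto its image, so it contributes no hardness. Any $t$ in the image determines the integer vector $r(\mathcal{R}(v))$ uniquely, by reading off the sign bit and Gray-decoding $|v_i|$. The argument must therefore rest on $r$ and $\mathcal{R}$, and I would state this explicitly rather than claim secrecy from $\mathcal{G}$.

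First, I would show that the roundoff $r(x)=\mathrm{round}(x\cdot 10^p)$ is non-injective coordinatewise. Each integer $z$ has preimage the whole interval $[(z-\tfrac12)10^{-p},(z+\tfrac12)10^{-p})$, which is uncountable. So from $t$ one recovers the real feature vector $\mathcal{R}(v)$ only up to a box of side $10^{-p}$ in each coordinate. Second, I would invoke the structure of the speaker extractor $\mathcal{R}$. It maps a raw waveform in a very high-dimensional space $I$ to a $d$-dimensional embedding, with $d$ much smaller than $\dim I$. It is therefore massively many-to-one: every embedding point has a preimage consisting of all utterances (different content, channel, noise) producing it. Combining the two steps, $\mathcal{T}^{-1}(t)$ equals $\mathcal{R}^{-1}$ of a box, which is a set of positive measure in the embedding space pulled back into $I$. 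This set contains infinitely many distinct voice samples, so no algorithm can output the specific original $v$ with non-negligible probability. Formally, I would bound the success probability of any adversary by $1/|\mathcal{T}^{-1}(t)|$ after discretising $I$ at sample precision, and argue that this quantity is negligible.

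The main obstacle is the gap between non-injectivity and computational infeasibility. An attacker may not need the exact $v$: any $v'$ with $\mathcal{T}(v')=t$ might suffice, for instance by synthesising speech whose embedding lands in the box. I would handle this by making the statement precise as \emph{recovery of the original} $v$, which the preimage-size argument covers. I would also remark that preimage search for $\mathcal{R}$, a deep nonlinear network, has no known efficient algorithm. That remark is a heuristic assumption, not a proof. Within the full HashGray-XOR scheme, $t$ is additionally masked by $\mathcal{H}(k)$, so an adversary holding only the protected template does not even learn $\mathcal{T}(v)$ without $k$.
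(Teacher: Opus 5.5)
You take a genuinely different route from the paper. The paper argues abstractly. Recovering $v$ would make $\mathcal{T}=\mathcal{G}\circ(r\circ\mathcal{R})$ invertible, so by its Lemma~2 (a composite is invertible iff both factors are) $r$ would be invertible, contradicting $r(3.6)=r(3.9)$. This argument never touches $\mathcal{R}$, but the ``only if'' direction of Lemma~2 is false for the outer factor. Injectivity of $h\circ g$ forces only $g$ to be injective: take $g:\mathbb{Z}\hookrightarrow\mathbb{R}$ and $h=r$, so that $h\circ g=\mathrm{id}_{\mathbb{Z}}$. The proof of Part~2 slips by passing from a collision $g(x_1)=g(x_2)$ of the outer map to $g(h(x_1))=g(h(x_2))$. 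Since $r$ is the outer factor of $r\circ\mathcal{R}$, its non-injectivity on $\mathbb{R}$ proves nothing unless two points of $\mathrm{Im}\,\mathcal{R}$ round alike. Your preimage computation sidesteps this. You observe that $\mathcal{G}$ with the sign bit is injective and contributes nothing, and you put the weight on the inner map $\mathcal{R}$, whose collisions do propagate, since $\mathcal{T}^{-1}(\mathcal{T}(v))\supseteq\mathcal{R}^{-1}(\mathcal{R}(v))$. The price is an assumption on $\mathcal{R}$. Non-injectivity is rigorous by invariance of domain if $I$ is a continuum and $\mathcal{R}$ is continuous, but that a particular fibre is large is only generic, and after discretisation pigeonhole controls only the average preimage size. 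Likewise, the fact that the box $B$ has positive measure is beside the point, since only $B\cap\mathrm{Im}\,\mathcal{R}$ matters. One step needs repair. The quantity $1/|\mathcal{T}^{-1}(t)|$ is not an upper bound on an optimal adversary's success probability, which is $\max_{v'}\Pr[v'\mid t]\ge 1/|\mathcal{T}^{-1}(t)|$, with equality only for a uniform posterior. You therefore need a conditional min-entropy assumption on voice samples rather than a cardinality count. Like the paper, you end up proving that exact inversion is ill-posed, not that finding some preimage is computationally hard, and you are right to say so explicitly.
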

To prove this theorem, we establish two lemmas:

\begin{lemma} (Non-invertibility of Roundoff Function)
    The function $r: \mathbb{R} \rightarrow \mathbb{Z}$ defined by 
    $$r(x) = \begin{cases} \lfloor x \rfloor, & \text{if } |x - \lfloor x \rfloor| < 0.5 \\
\lceil x \rceil, & \text{otherwise}
\end{cases}$$ is not invertible.
\label{lemma:1}
\end{lemma}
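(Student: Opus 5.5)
The plan is to show that $r$ is not injective, which immediately rules out a two-sided inverse. Surjectivity onto $\mathbb{Z}$ is irrelevant here: a function $g:\mathbb{Z}\to\mathbb{R}$ with $g\circ r=\mathrm{id}_{\mathbb{R}}$ exists only if $r$ is injective. So it suffices to exhibit two distinct reals with the same image. The stronger, and more useful, statement is that every fiber $r^{-1}(m)$ for $m\in\mathbb{Z}$ is uncountable. I will prove that version, since it is what Theorem~\ref{theorem:unrecoverability} actually needs: the preimage of a rounded value carries no information about where in an interval the original feature lay.

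\textbf{Step 1: compute each fiber.} Fix $m\in\mathbb{Z}$. For $x\in[m,m+\tfrac12)$ we have $\lfloor x\rfloor=m$ and $|x-\lfloor x\rfloor|=x-m<0.5$, so the first case of the definition gives $r(x)=m$. For $x\in[m-\tfrac12,m)$ we have $\lfloor x\rfloor=m-1$ and $x-\lfloor x\rfloor=x-m+1\ge 0.5$, so the second case applies. Since $x\notin\mathbb{Z}$, $\lceil x\rceil=m$, and again $r(x)=m$. Hence
\[
r^{-1}(m)=[m-\tfrac12,\,m+\tfrac12).
\]
This is a nondegenerate interval, so it contains infinitely (indeed uncountably) many reals.

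\textbf{Step 2: conclude.} As a concrete witness, $r(0)=r(0.25)=0$ with $0\neq 0.25$, so $r$ is not injective. Suppose some $g$ satisfied $g(r(x))=x$ for all $x$. Then $0=g(0)=0.25$, a contradiction. Therefore $r$ is not invertible. I would add a remark that any map trying to recover $x$ from $r(x)$ can at best pick one point in an interval of length $1$. After the scaling $x\mapsto x\cdot 10^p$ used in the paper's $r$, this becomes an interval of length $10^{-p}$ in the original feature coordinate, so the rounding error is irrecoverable.

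\textbf{Main obstacle.} The only delicate part is the boundary and negative cases in Step 1. The condition $|x-\lfloor x\rfloor|<0.5$ always uses the floor, so negative non-integers must be checked to land in the correct case. Taking $x-\lfloor x\rfloor\in[0,1)$ uniformly handles them. Half-integers round up by the ``otherwise'' branch, which is why the fiber is half-open on the right.
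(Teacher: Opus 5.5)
Your proof is correct and follows the same route as the paper: invertibility forces injectivity, and a single collision rules it out. Your witness is $r(0)=r(0.25)=0$ where the paper uses $r(3.6)=r(3.9)=4$; your fiber computation $r^{-1}(m)=[m-1/2,\,m+1/2)$ is also correct but is a strengthening the lemma itself does not need.
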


Indeed, if $r$ is invertible, then $r$ is injective, meaning that $\forall x_1, x_2$, if $x_1 \neq x_2$ then $r(x_1) \neq r(x_2)$. This is contradicted by the fact that $r(3.6) = r(3.9) = 4$ but $3.6 \neq 3.9$. 

\begin{lemma}
    Given $f = h \circ g$, the function $f$ is invertible if and only if $h$ and $g$ are invertible functions.
    \label{lemma:2}
\end{lemma}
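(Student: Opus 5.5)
We prove the two directions separately, taking ``invertible'' to mean bijective onto the stated codomain. Throughout, $g: X \rightarrow Y$, $h: Y \rightarrow Z$ and $f = h \circ g : X \rightarrow Z$.

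\emph{The ``if'' direction.} Suppose $g$ and $h$ are invertible. Set $f^{-1} := g^{-1} \circ h^{-1}$ and verify both compositions. First, $f^{-1}\circ f = g^{-1}\circ (h^{-1}\circ h)\circ g = g^{-1}\circ g = \mathrm{id}_X$. Second, $f\circ f^{-1} = h\circ (g\circ g^{-1})\circ h^{-1} = h\circ h^{-1} = \mathrm{id}_Z$. Both steps use only associativity of composition, so this direction is routine.

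\emph{The ``only if'' direction.} Suppose $f$ is invertible. The standard facts give two pieces of what we need.
\begin{itemize}
\item[(i)] $g$ is injective. If $g(x_1)=g(x_2)$, then $f(x_1)=f(x_2)$, and injectivity of $f$ gives $x_1=x_2$.
\item[(ii)] $h$ is surjective. For any $z\in Z$, write $z = f(x) = h(g(x))$.
\end{itemize}
The remaining properties, surjectivity of $g$ and injectivity of $h$, are the main obstacle. They do not follow in full generality. For example, take $X=Z=\{1\}$, $Y=\{1,2\}$, $g(1)=1$ and $h \equiv 1$. Then $f$ is bijective, yet neither $g$ nor $h$ is.

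We therefore add the hypothesis that $g$ is surjective onto the domain of $h$, i.e.\ $Y = g(X)$. In our setting this is natural: the domain of $\mathcal{G}$ can be restricted to the image of $r\circ\mathcal{R}$, and the domain of $r$ to the image of $\mathcal{R}$. Under this hypothesis, $g$ is bijective by (i). Then $h = f\circ g^{-1}$ is a composition of two bijections, hence bijective. Equivalently, if $h(y_1)=h(y_2)$, write $y_i=g(x_i)$; then $f(x_1)=f(x_2)$, so $x_1=x_2$ and hence $y_1=y_2$.

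\emph{Use in the paper.} For Theorem~\ref{theorem:unrecoverability}, only the contrapositive consequence of the lemma matters: if some component of $\mathcal{T}=\mathcal{G}\circ r\circ\mathcal{R}$ fails to be invertible, then $\mathcal{T}$ is not invertible. The component in question is $r$, by Lemma~\ref{lemma:1}.

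This consequence is safest to justify directly. Since $r$ fails injectivity, we need two voice samples $v_1\neq v_2$ whose features $\mathcal{R}(v_1)\neq\mathcal{R}(v_2)$ are both in the image of $\mathcal{R}$ and satisfy $r(\mathcal{R}(v_1))=r(\mathcal{R}(v_2))$. For such a pair, $\mathcal{T}(v_1)=\mathcal{T}(v_2)$, so $\mathcal{T}$ is not injective. This requires the mild assumption that the image of $\mathcal{R}$ contains two distinct points that round to the same value. That assumption holds, for instance, whenever $\mathcal{R}$ has a continuum of outputs.
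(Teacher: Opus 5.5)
Your proposal is correct, and it is more accurate than the paper's proof. The ``only if'' direction of the lemma is false as stated. Your example ($X=Z=\{1\}$, $Y=\{1,2\}$, $h\equiv 1$) refutes it under either reading of ``invertible'': bijective, or merely injective. The injective reading is the one Theorem 1 relies on, since it argues via uniqueness of preimages. Your example covers both readings because $h$ is not even injective while $f$ is bijective.

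The paper's Part 1 has the same idea as your ``if'' direction. However, it composes the inverse in the wrong order, $f^{-1}=h^{-1}\circ g^{-1}$, and the verification is garbled; your $g^{-1}\circ h^{-1}$ is the correct version. In Part 2 the paper switches notation to $f=g\circ h$, so $g$ becomes the outer map. It assumes $g$ is not invertible, takes $x_1\neq x_2$ with $g(x_1)=g(x_2)$, and concludes $g(h(x_1))=g(h(x_2))$. This fails in two ways:
\begin{enumerate}
\item Non-invertibility need not mean non-injectivity, since a map can fail only surjectivity.
\item A collision of the outer map says nothing about the composite unless the colliding points lie in the image of the inner map. Your example violates exactly this.
\end{enumerate}
The only valid case is the one the paper defers to ``Similarly'': the inner map inherits injectivity (your item (i)), while the outer map inherits only surjectivity (your item (ii)).

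Your repair is sound: assume $g$ surjective, or equivalently $h$ injective once $f$ is bijective. Your remark on the application is the key point. Theorem 1 uses the false direction on the outer factor $r$ of $r\circ\mathcal{R}$. Its conclusion therefore needs an actual collision of $r$ inside the image of $\mathcal{R}$, which is what you assume. Even then, it yields only that $\mathcal{T}$ is not injective. That is weaker than the computational infeasibility of recovering $v$ that Theorem 1 asserts.
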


\begin{proof} \textbf{(Lemma 2)} To prove that $f$ is invertible if and only if $g$ and $h$ are invertible functions, we need to prove two parts. 

\textbf{Part 1}: If $g$ and $h$ are invertible functions. Then there exist inverse functions $g^{-1}$ and $h^{-1}$. The composite function $f = h \circ g$ means $f(x) = h(h(x))$. To prove that $f$ is invertible, we need to find the inverse function $f^{-1}$. 

We have $f^{-1} = h^{-1} \circ g^{-1}$ (by the property of composite functions). Indeed, $(f^{-1} \circ f)(x) = (h^{-1} \circ g^{-1})(x) = x$. Similarly, $(f \circ f^{-1})(x) = x$. Therefore, $f$ is invertible. 

\textbf{Part 2.} If $f$ is invertible, then $g$ and $h$ are invertible. Assume $f$ is invertible, then there exists an invertible function $f^{-1}$. The composite function $f = g \circ h$ means $f(x) = g(h(x)).$ We need to prove that $g$ and $h$ are invertible. 

For $g$, suppose $g$ is not invertible. There exists $x_1 \neq x_2$ such that $g(x_1) = g(x_2).$ Therefore, $f(x_1) = g(h(x_1)) = g(h(x_2)) = f(x_2)$. This contradicts the fact that $f$ is invertible (because $f(x_1) \neq f(x_2)$ when $x_1 \neq x_2$). Therefore, $g$ must be invertible. Similarly, $h$ is proved as the same.

\textbf{Conclusion.} $f$ is invertible if and only if $g$ and $h$ are invertible functions.
\end{proof}

\begin{proof} \textbf{(Theorem 1)}
    
    For any $v \in I$, let $b = (b_0, b_1, ..., b_l) = T(v) = G(r(R(v)))$.  Suppose that given $b = (b_0, b_1, ..., b_l) = T(v) = G(r(R(v)))$, one can recover exactly $v: T(v) = b$. This means given $b = (b_0, b_1, ..., b_l)$, there is only one $v: T(v) = b$, or $T = G \circ H$ is invertible, where $H = r \circ R$. 
    
    By Lemma \ref{lemma:2}, $H = r \circ R$ is invertible if and only if $r$ and $R$ are invertible functions. This contradicts the fact that $r$ is not invertible (by Lemma \ref{lemma:1}). Therefore, $T$ is unrecoverable.
\end{proof}
\subsubsection{Cryptographic Hash Function}
The cryptographic hash function $\mathcal{H}$ is defined as: $\mathcal{H}: \{0,1\}^* \rightarrow \{0,1\}^n$. This function maps an arbitrary binary string (a secret key) to a fixed-size binary string of $n$ bits, serving as the biometric template protection mechanism. 

\begin{definition} (Cancelable Biometric Function) 
A biometric function is cancelable if it can be designed to allow an individual to enroll and revoke many different samples. And the value of the biometric function for a generated sample is called a biometric template.    
\end{definition}

\begin{definition}
    A cryptographic hash function $H: \{0,1\}^* \rightarrow \{0,1\}^n$ has the following desirable properties:
    \begin{enumerate}
        \item Pre-image resistance: Given a hash value $y$, it is computationally difficult to find any $x$ such that $y = H(x)$.
        \item Second pre-image resistance: Given a pre-image $x_1$, it is computationally difficult to find another pre-image $x_2$ such that $H(x_1) = H(x_2)$.
        \item Collision resistance: It is computationally difficult to find two different values $x_1$ and $x_2$ such that $H(x_1) = H(x_2)$.
    \end{enumerate}
\end{definition}

\begin{theorem} (Cancelability Property)
    Given a cryptographic hash function $H: \{0,1\}^* \rightarrow \{0,1\}^n$ and an unrecoverable function of voice biometric $T: I \rightarrow \{0,1\}^n$, the voice biometric function defined as follows is cancelable:
    $f: \{0,1\}^* \times I \rightarrow \{0,1\}^n$ defined by $f(k,v) = H(k) \oplus T(v), \forall k \in \{0,1\}^*, v \in I$. 
    \label{theorem:cancelability}
\end{theorem}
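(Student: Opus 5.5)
To prove Theorem \ref{theorem:cancelability}, I will check the Definition of a cancelable biometric function directly. An individual must be able to enroll many different templates from the same voice and revoke any of them. The key parameter is the secret key $k$. For a fixed voice sample $v$, I will study the family $\{f(k,v) : k \in \{0,1\}^*\}$ and show that changing $k$ yields a genuinely new template. I will also show that the old template remains useless once revoked.

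\textbf{Step 1 (distinct templates per key).} Fix $v \in I$ and take two keys $k_1 \neq k_2$. Since XOR with the fixed string $T(v)$ is a bijection on $\{0,1\}^n$, we have $f(k_1,v) = f(k_2,v)$ if and only if $H(k_1) = H(k_2)$. By collision resistance, and in particular second pre-image resistance with $k_1$ given, it is computationally infeasible to produce such a pair. Hence with overwhelming probability every new key gives a new template. This shows enrollment of many different templates is possible: re-enrollment only requires choosing a fresh key, with the same voice.

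\textbf{Step 2 (revocation).} Suppose $t_1 = f(k_1,v)$ is compromised. The system replaces it by $t_2 = f(k_2,v)$. The goal is to show that an adversary holding $t_1$, and even $k_1$, cannot authenticate against $t_2$. Knowing $t_1$ and $k_1$ gives at most $T(v) = t_1 \oplus H(k_1)$. To pass against $t_2$, the adversary needs a key $k'$ with $H(k') \oplus T(v') \approx t_2$. If the adversary does not have a voice $v'$ close to $v$, this means producing $H(k') = t_2 \oplus T(v')$ for some target string, which is pre-image resistance. If the adversary does have such a $v'$, they still need $H(k') = H(k_2)$ without knowing $k_2$, which again reduces to pre-image or second pre-image resistance. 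Meanwhile, Theorem \ref{theorem:unrecoverability} guarantees that even the leaked $T(v)$ does not let the adversary reconstruct $v$ itself. So the compromise of $t_1$ does not propagate to the biometric.

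\textbf{Main obstacle.} The hardest part is Step 2, because "revoke" is informal in the Definition. The argument must separate what leaks, namely $T(v)$, from what must stay secret, namely the new key $k_2$. It must also justify that $H(k_2)$ acts as a fresh mask. My first attempt will treat $H$ as a random oracle, so that $H(k_2)$ is uniform and independent of $t_1$; then $t_2$ is a one-time-pad encryption of $T(v)$. If that idealization is not acceptable, I will argue the weaker statement using only the three properties in the Definition of cryptographic hash functions, as sketched above.
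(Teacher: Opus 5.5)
Your Step 1 is the paper's proof in substance. The paper writes $t_i=\mathcal{H}(k_i)\oplus\mathcal{T}(v_i)$ and tabulates $t_1\oplus t_2$ according to whether $k_1=k_2$ and whether $v_1=v_2$. Its operative case, $k_1\neq k_2$ with $v_1=v_2$ (the remark after the proof really uses $\mathcal{T}(v_1)=\mathcal{T}(v_2)$), gives $\mathcal{H}(k_1)\oplus\mathcal{H}(k_2)$, which is nonzero with high probability by collision resistance. That is exactly your observation that XOR with $\mathcal{T}(v)$ is a bijection. The table also records the case $k_1=k_2$, $v_1\neq v_2$, where the key cancels and leaves $\mathcal{T}(v_1)\oplus\mathcal{T}(v_2)$. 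That case checks that a re-enrolled template still authenticates its owner with unchanged matching behaviour, which ``enroll'' in the Definition presupposes, so you should add it.

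The paper never formalises revocation, so Step 2 is your own addition. Its fallback ``using only the three properties'' does not go through as sketched:
\begin{itemize}
\item Second pre-image resistance plays no role, since the adversary holds no pre-image of $\mathcal{H}(k_2)$.
\item Pre-image resistance needs the target to be given, yet an adversary with only $t_1,k_1$ never sees $\mathcal{H}(k_2)$. What protects the new template is the secrecy and min-entropy of $k_2$, an assumption you must state; it is not a hash property.
\item If $t_2$ also leaks, then in your same-$v$ setting $\mathcal{H}(k_2)=t_1\oplus t_2\oplus\mathcal{H}(k_1)$ exactly. For a memorised PIN the key space can then simply be enumerated. For the same reason, your random-oracle one-time-pad picture holds only if $k_2$ has high min-entropy.
\item Acceptance is a threshold on $S$, so the adversary only needs $\mathcal{H}(k')$ to be Hamming-close to the target. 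That is near-collision or near-pre-image resistance, which the three listed properties do not supply. For example, padding a collision-resistant hash with many constant bits preserves all three properties yet lets every wrong key pass the threshold.
\item Theorem~\ref{theorem:unrecoverability} shows only that $\mathcal{T}$ is not injective. Because the Gray code is a bijection, $\mathcal{T}(v)$ still reveals $\mathcal{R}(v)$ rounded to $p$ decimals, which is precisely what matching uses. So it does not support your claim that ``the compromise of $t_1$ does not propagate to the biometric''.
\end{itemize}
For the theorem at the level the paper proves it, Step 1 plus the usability case is enough. Step 2 should either be dropped or stated under explicit random-oracle and high-entropy-key assumptions.
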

\begin{proof}
    First, we note that if $t_1 = \mathcal{H}(k_1) \oplus \mathcal{T}(v_1)$ and $t_2 = \mathcal{H}(k_2) \oplus \mathcal{T}(v_2)$ are two voice biometric templates of the same person at two different sampling times, $v_1$ and $v_2$, we have:
    
{$
    $$t_1 \oplus t_2 = \begin{cases}
    0, & \text{if } k_1 = k_2, v_1 = v_2 \\
    \mathcal{T}(v_1) \oplus \mathcal{T}(v_2), & \text{if } k_1 = k_2, v_1 \neq v_2 \\
    \mathcal{H}(k_1) \oplus \mathcal{H}(k_2), & \text{if } k_1 \neq k_2, v_1 = v_2 \\
    (\mathcal{H}(k_1) \oplus \mathcal{H}(k_2)) \oplus (\mathcal{T}(v_1) \oplus \mathcal{T}(v_2)), & \text{otherwise}
    \end{cases}$$
$}
\end{proof}
Theorem \ref{theorem:cancelability} shows that even if the voice biometric samples have the same binary representation but different secret keys, the voice biometric function values will be different with high probability if the cryptographic hash function has high collision resistance.

Theorems \ref{theorem:unrecoverability} and \ref{theorem:cancelability} ensure that our ChaRVoC system provides both non-invertible and unrecoverable template protection. 

\subsection{Template Matching}
\label{sec:matching}
The matching process compares binary templates generated during enrollment and authentication. For two binary templates $T_1 = b_{10}b_{11}...b_{1n}$ and $T_2 = b_{20}b_{21}...b_{2n}$, the similarity measure is:
\begin{equation}
    S(T_1, T_2) = \frac{\sum_{i=0}^{n}(1-|b_{1i}-b_{2i}|)}{2n-\sum_{i=0}^{n}(1-|b_{1i}-b_{2i}|)}
\end{equation}
This computes the ratio of matching bits, producing a similarity score between 0 and 1, where 1 indicates identical templates and 0 indicates completely different templates.

\section{Experiments}
\subsection{Experimental Setup}
\textbf{Datasets.} We use three datasets: VoxCeleb1~\cite{voxceleb1}, TIMIT~\cite{timit}, and VOiCES~\cite{richey2018voices} to evaluate our proposed ChaRVoC system. VOiCES is utilized to evaluate the accuracy performance of our system in noisy conditions (e.g., background noise and reverberation).

\textbf{Speaker extractors.} To demonstrate the compatibility of our HashGray-XOR scheme, we utilize three popular speaker recognition models, including ECAPA-TDNN \cite{ECAPA}, RawNet2 \cite{jung2020rawnet2}, and RawNet3 \cite{jung2022raw3}. For all models, we employ publicly available pretrained weights.

\textbf{Baselines.} We compare our proposed method with three cancellable biometrics approaches, Winner-Take-All (WTA) \cite{yagnik2011power}, Index-of-Max (IoM)  \cite{jin2017ranking} hashing, and Ranking-of-Element (RoE) hashing \cite{nguyen2025privacy}.

\textbf{Metrics.} We use three evaluation metrics: equal error rate (EER), the area
under the ROC curve (AUC), and True Match Rate at a False Match Rate of 0.1\% (TMR@FMR=0.1\%). 

\subsection{Experimental Results}
\label{sec:experiment}
This section presents the results of our ChaRVoC system regarding recognition performance and computation efficiency. Furthermore, we also analyze the cancelability and unlinkability of our system.

\begin{table}[!t]
\caption{
Performance Comparison of Different Speaker Recognition Models Across Three Datasets. Lower EER and Higher AUC/TMR Values Indicate Better Performance.}
\label{tab:speakerverification}
\centering
{%
\begin{tabular}{ccccc}
\toprule
\multirow{2}{*}{\textbf{Dataset}} & 
\multirow{2}{*}{\textbf{Model}} & 
\multicolumn{3}{c}{\textbf{Metric}} \\
\cmidrule(lr){3-5}
 & & \textbf{EER} & \textbf{AUC} & \textbf{TMR@FMR=0.1\%} \\
\midrule
\multirow{3}{*}{VoxCeleb1} 
  & RawNet3    & 5.0901      & 0.9836     & \textbf{0.7588} \\
  & RawNet2    & \textbf{2.8791} & \textbf{0.9953} & 0.7509 \\
  & ECAPA-TDNN & 11.0869     & 0.9564     & 0.4045 \\
\midrule
\multirow{3}{*}{TIMIT} 
  & RawNet3    & 2.4901      & 0.9854     & \textbf{0.9272} \\
  & RawNet2    & \textbf{1.7008} & \textbf{0.9986} & 0.8786 \\
  & ECAPA-TDNN & 7.0233      & 0.9819     & 0.5774 \\
\midrule
\multirow{3}{*}{VOiCES} 
  & RawNet3    & 26.1745     & 0.7796     & 0.0250 \\
& RawNet2    & \textbf{11.3680} & \textbf{0.9537} & 0.4774 \\
  & ECAPA-TDNN & 11.6902     & 0.9484     & \textbf{0.6016} \\
\bottomrule
\end{tabular}}
\end{table}

\textbf{Recognition Performance.} Table \ref{tab:speakerverification} shows the performance of our ChaRVoC system with different speaker extractors $\mathcal{R}$. RawNet2 consistently achieves the best overall performance across datasets, with the lowest EER of $2.88\%$ on VoxCeleb1, $1.70\%$ on TIMIT, and $11.368\%$ on the noisy VOiCES dataset. Based on these results, we select RawNet2 as our primary speaker extractor for subsequent experiments.

\textbf{Comparison with Baselines.} Table \ref{tab:baseline} demonstrates that our method significantly outperforms existing cancelable biometric approaches across all datasets. Unlike WTA and IoM, which may lose essential discriminative information
during the hashing process, our approach preserves the biometric characteristics. Compared to the state-of-the-art voice-based cancelable biometrics system - RoE, our method also outperforms with the EER of $2.879\%$, $1.700\%$, and $11/368\%$ on VoxCeleb1, TIMIT, and VOiCES datasets, respectively. 

\begin{table}[t]
\centering
\caption{Performance comparison of our ChaRVoC method against baselines (WTA, IoM, RoE). Lower EER and higher AUC/TMR values indicate better performance.}
\label{tab:baseline}
{%
\begin{tabular}{ccccc}
\toprule
\multirow{2}{*}{\textbf{Dataset}} & 
\multirow{2}{*}{\textbf{Method}} & 
\multicolumn{3}{c}{\textbf{Metric}} \\
\cmidrule(lr){3-5}
 & & \textbf{EER} & \textbf{AUC} & \textbf{TMR@FMR=0.1\%} \\
\midrule
\multirow{4}{*}{VoxCeleb1} 
  & WTA          & 9.341          & 0.798          & 0.576 \\
  & IoM          & 8.260          & 0.856          & 0.631 \\
  & RoE          & 3.213          & 0.959          & 0.713 \\
  & \textbf{Our} & \textbf{2.879} & \textbf{0.995} & \textbf{0.751} \\
\midrule
\multirow{4}{*}{TIMIT} 
  & WTA          & 8.624          & 0.806          & 0.8205 \\
  & IoM          & 7.256          & 0.891          & 0.8384 \\
  & RoE          & 2.294          & 0.976          & 0.8491 \\
  & \textbf{Our} & \textbf{1.700} & \textbf{0.998} & \textbf{0.878} \\
\midrule
\multirow{4}{*}{VOiCES} 
  & WTA          & 18.041         & 0.578          & 0.301 \\
  & IoM          & 15.297         & 0.687          & 0.398 \\
  & RoE          & 11.879         & 0.894          & 0.461 \\
  & \textbf{Our} & \textbf{11.368}& \textbf{0.953} & \textbf{0.477} \\
\bottomrule
\end{tabular}}
\end{table}


\begin{table}[h!]
    \centering
    \caption{Comparison of Time Complexity (in seconds) for Template Generation between ChaRVoC and Baselines.}
    \label{tab:time_complexity}
    \begin{tabular}{llc}
        \toprule
        \textbf{Method} & \textbf{Dataset} & \textbf{Time complexity (s)} \\
        \midrule
        \multirow{2}{*}{WTA [24]} 
         & VoxCeleb1 & 0.0380 \\
         & TIMIT     & 0.0380 \\
        \midrule
        \multirow{2}{*}{IoM [25]} 
         & VoxCeleb1 & 0.1050 \\
         & TIMIT     & 0.1070 \\
        \midrule
        \multirow{2}{*}{RoE [4]} 
         & VoxCeleb1 & 0.1060 \\
         & TIMIT     & 0.1010 \\
        \midrule
        \multirow{2}{*}{\textbf{ChaRVoC (Ours)}} 
         & \textbf{VoxCeleb1} & \textbf{0.0054} \\
         & \textbf{TIMIT}     & \textbf{0.0052} \\
        \bottomrule
    \end{tabular}
\end{table}


\textbf{Time Complexity Evaluation.} Table \ref{tab:time_complexity} presents a comparison of the time complexity for template generation between our ChaRVoC system and the baseline methods (WTA, IoM, and RoE) on the VoxCeleb1 and TIMIT datasets. Our method demonstrates superior computational efficiency, with template generation times of just 0.0054 seconds on VoxCeleb1 and 0.0052 seconds on TIMIT, significantly faster than all the baselines.


\begin{figure*}[ht]
        \centering
        \caption{Distributions of mated samples and non-mated samples.}
        \label{fig:unlinkability}
        \begin{subfigure}{0.32\textwidth}
            \centering
            \includegraphics[width=\textwidth]{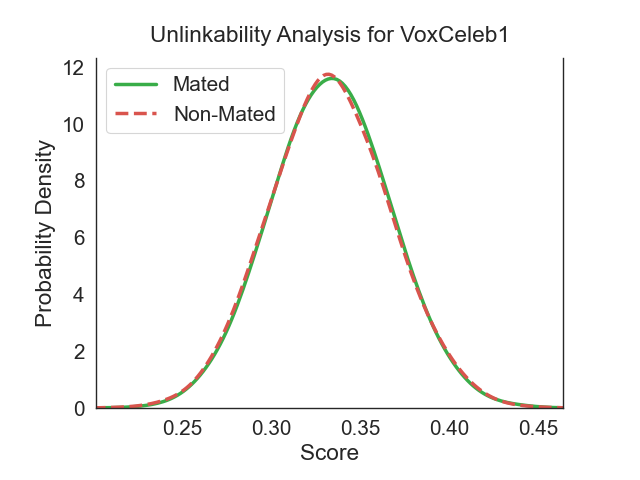}
            \caption{VoxCeleb1}
        \end{subfigure}\hfill
        \begin{subfigure}{0.32\textwidth}
            \centering
            \includegraphics[width=\textwidth]{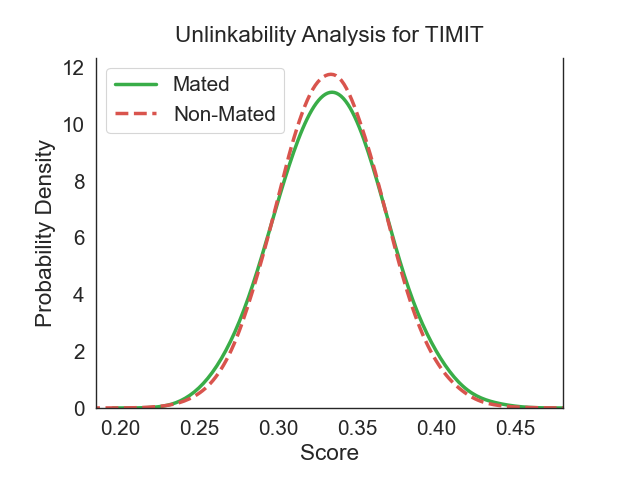}
            \caption{TIMIT}
        \end{subfigure}\hfill
        \begin{subfigure}{0.32\textwidth}
            \centering
            \includegraphics[width=\textwidth]{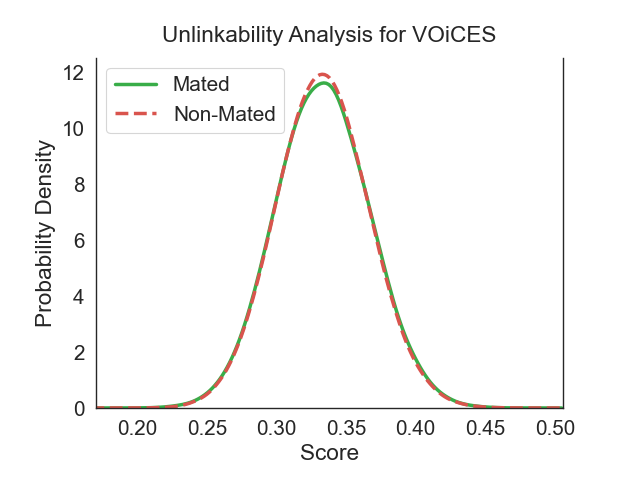}
            \caption{VOiCES}
        \end{subfigure}
\end{figure*}

\textbf{Security Analysis.} To evaluate unlinkability, we used the framework proposed by Gomez-Barrero et al. \cite{gomez2017general}. Figure \ref{fig:unlinkability} illustrates the similarity score distributions between mated (same voice, different keys) and non-mated (different voices) samples. The substantial overlap between these distributions across all datasets confirms that templates generated from the same voice but different keys are computationally indistinguishable from those of different users, ensuring strong privacy protection against cross-matching attacks. 

\section{Conclusion}
\label{sec:conclusion}

In this work, we propose \textbf{ChaRVoC}, a Challenge-Response Voice Cancelable authentication system. By integrating voice biometrics, secret keys, and dynamic challenge-response mechanisms, our system provides template revocability, replay attack prevention, and non-invertible protection through the novel HashGray-XOR scheme. Experimental results across VoxCeleb1, TIMIT, and VOiCES datasets demonstrate superior performance of our system over existing cancelable biometric systems, achieving EERs as low as $1.70\%$. Furthermore, our security analysis confirms strong template unlinkability, while practical processing times (0.76s enrollment, 3.18s authentication) validate real-world deployability.
\bibliographystyle{splncs04}
\bibliography{refs}
\newpage
\end{document}